\theoremstyle{plain}
\newtheorem{theorem}{Theorem}
\newtheorem{corollary}[theorem]{Corollary}
\newtheorem{lemma}[theorem]{Lemma}
\newtheorem{proposition}[theorem]{Proposition}
\theoremstyle{definition}
\newtheorem{openproblem}[theorem]{Open Problem}
\theoremstyle{remark}
\title{Remarks on Privileged Words}
\author{Michael Forsyth, Amlesh Jayakumar, and Jeffrey Shallit\\
School of Computer Science \\
University of Waterloo \\
Waterloo, ON  N2L 3G1 \\
Canada \\
{\tt \{mfforsyth,a3jayakumar,shallit\}@uwaterloo.ca}
}
\begin{document}

\maketitle

\begin{abstract}
We discuss the notion of privileged word, recently introduced
by Peltom\"aki.  
A word $w$ is privileged if it is of length $\leq 1$,
or has a privileged border that occurs exactly twice in $w$.
We prove the following results:
(1) if $w^k$ is privileged for some $k \geq 1$, then $w^j$ is
privileged for all $j \geq 0$;
(2) the language of privileged words is
neither regular nor context-free;
(3) there is a linear-time algorithm to check if a given word is
privileged; and 
(4) there are at least $2^{n-5}/n^2$
privileged binary words of length $n$.
\end{abstract}

\section{Introduction}

We say that a word $x$ is a {\it border} of $w$ if it is
both a prefix and a suffix of $w$. 

Peltom\"aki \cite{Pelt1,Pelt2} recently introduced the notion
of {\it privileged word}.
A word $w$ is privileged if
\begin{itemize}
\item[(a)] it is of length $\leq 1$, or
\item[(b)] it has a privileged
border that appears exactly twice in $w$.
\end{itemize}

Here are the first few privileged words over a binary alphabet:
$$  
0, 1, 00, 11, 000, 010, 101, 111,
0000,0110,1001,1111,
00000,00100,01010,01110,10001, $$
$$ 10101,11011,11111,
000000,001100,010010,011110,100001,101101,110011,111111. $$
An easy induction shows that $a^i$ is privileged for
for any letter $a$ and $i \geq 0$.    

We now recall two results of Peltom\"aki \cite{Pelt1}.

\begin{theorem}
Let $w$ be privileged.
\begin{itemize}
\item[(a)]  If $t$ is a privileged prefix (resp., suffix) of $w$, then
$t$ is also a suffix (resp., prefix) of $w$.

\item[(b)]  If $v$ is a border of $w$ then $v$ is privileged.
\end{itemize}
\label{pelto}
\end{theorem}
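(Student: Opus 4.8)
The plan is to prove both parts by induction on $|w|$, exploiting the recursive shape of the definition of privilege together with careful bookkeeping of the positions at which a factor occurs. Both parts are vacuous when $|w|\le 1$ (the only privileged prefixes, suffixes, or borders are the empty word and $w$ itself, each of which trivially satisfies the conclusions), so in the inductive step I assume $|w|\ge 2$. Then, since $w$ is privileged, it has a privileged border $u$ occurring exactly twice in $w$; note that ``exactly twice'' forces $u$ to be a proper, nonempty border, so $1\le |u|<|w|$, and the two occurrences of $u$ in $w$ sit at positions $0$ and $|w|-|u|$, which are distinct. I will freely use that any two prefixes of a word are comparable, and hence that any two borders of $w$ are nested (the shorter being a border of the longer).

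For part (b), let $v$ be a border of $w$; I may assume $1\le |v|<|w|$. Since $u$ and $v$ are both borders of $w$ they are nested. If $|v|\le |u|$, then $v$ is a border of the privileged word $u$, and since $|u|<|w|$ the induction hypothesis gives that $v$ is privileged. If $|v|>|u|$, then $u$ is a privileged border of $v$, and it remains only to check that $u$ occurs exactly twice in $v$. It occurs at least twice (positions $0$ and $|v|-|u|$, distinct because $|u|<|v|$); and if it occurred three or more times, then, since $v$ is a prefix of $w$, those would be three occurrences of $u$ in $w$ at positions $\le |v|-|u|$, while the suffix occurrence of $u$ in $w$ lies at position $|w|-|u|>|v|-|u|$ and is therefore a fourth, contradicting that $u$ occurs exactly twice in $w$. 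Hence $v$ is privileged.

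For part (a), it suffices to treat the implication ``privileged prefix $\Rightarrow$ suffix''; the ``resp.'' version follows by reading the argument backwards, using that the reverse of a privileged word is privileged (immediate by induction, since reversal carries borders to borders and preserves the number of occurrences of any factor). So let $t$ be a privileged prefix of $w$; I may assume $|t|<|w|$, so that $t$ is a privileged proper prefix and the induction hypothesis applies to $t$ itself. The crux is the claim that $|t|\le |u|$: if instead $|u|<|t|<|w|$, then $u$ is a privileged prefix of $t$, so the induction hypothesis applied to $t$ makes $u$ a suffix of $t$, which (as $t$ is a prefix of $w$) places an occurrence of $u$ in $w$ at position $|t|-|u|$, strictly between $0$ and $|w|-|u|$ --- a third occurrence of $u$ in $w$, contradiction. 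Once $|t|\le |u|$ is known, $t$ is a privileged prefix of $u$, and since $|u|<|w|$ the induction hypothesis applied to $u$ shows $t$ is a suffix of $u$, hence a suffix of $w$.

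The routine parts are the degenerate-case checks and the verification that reversal preserves privilege. The genuinely delicate point in both parts is the position counting, where one must be careful that the tallied occurrences really are distinct --- in particular that a prefix occurrence and a suffix occurrence of $u$ coincide only when $u$ is the whole word. The step I expect to be the main obstacle is the inequality $|t|\le |u|$ in part (a): it is what lets the induction close, and the non-obvious move is to invoke the induction hypothesis on the prefix $t$ itself, not merely on the border $u$, which is precisely what prevents a privileged prefix from extending past $u$.
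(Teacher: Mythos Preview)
Your proof is correct. Both inductions are set up properly, the position bookkeeping is sound, and the key inequality $|t|\le|u|$ in part~(a) is established exactly as it must be, by applying the induction hypothesis to the privileged prefix $t$ itself. One minor remark on part~(b): in the case $|v|>|u|$, three occurrences of $u$ inside $v$ already yield three occurrences inside $w$, which is the contradiction you need; the ``fourth'' occurrence at position $|w|-|u|$ is superfluous (though not wrong).

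As for comparison: the paper does not actually prove this theorem. It is stated as a result recalled from Peltom\"aki~\cite{Pelt1} and no argument is given in the paper itself. Your induction is essentially the standard proof one would expect (and is, in substance, the argument in Peltom\"aki's original paper), so there is no meaningful methodological divergence to discuss.
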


Define the {\it number of leading $a$'s in $w$} to be the
largest integer $n$ such that $a^n$ is a prefix of $w$, and
similarly for the number of trailing $a$'s.  Then we have

\begin{corollary}
If $w$ is privileged, then the number of leading $a$'s in $w$
equals the number of trailing $a$'s.
\label{cor-leading}
\end{corollary}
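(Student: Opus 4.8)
The plan is to derive this immediately from Theorem~\ref{pelto}(a), combined with the fact already noted in the excerpt that $a^i$ is privileged for every letter $a$ and every $i \geq 0$. Fix a letter $a$ and a privileged word $w$, and let $p$ be the number of leading $a$'s in $w$ and $s$ the number of trailing $a$'s.

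First I would note that, by definition of $p$, the word $a^p$ is a prefix of $w$, and $a^p$ is privileged. Applying Theorem~\ref{pelto}(a) with $t = a^p$ (the ``prefix implies suffix'' direction), we get that $a^p$ is also a suffix of $w$; unwinding the definition, this says exactly that $w$ ends in at least $p$ copies of $a$, i.e.\ $s \geq p$. Then I would run the mirror-image argument: $a^s$ is a suffix of $w$ and is privileged, so by the ``suffix implies prefix'' half of Theorem~\ref{pelto}(a), $a^s$ is a prefix of $w$, whence $p \geq s$. The two inequalities together give $p = s$.

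I do not expect a genuine obstacle here; the only thing to be a little careful about is that the argument should cover the degenerate situations ($w = \varepsilon$, $|w| = 1$, $w = a^p$, or $p = 0$) without a separate case split. It does: when $p = 0$ the first inequality is vacuous, but the symmetric step still forces $s = 0$ (else $a^s$ with $s \geq 1$ would be a prefix of $w$, contradicting $p = 0$), and when $w$ is a power of $a$ or has length $\leq 1$ both counts are read off directly and agree. The one step worth spelling out explicitly in the write-up is the translation between ``$a^k$ is a prefix (resp.\ suffix) of $w$'' and ``$w$ has at least $k$ leading (resp.\ trailing) $a$'s,'' since that is the bridge between Theorem~\ref{pelto}(a) and the statement being proved.
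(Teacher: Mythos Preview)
Your proof is correct and takes essentially the same approach as the paper: the paper writes $w = a^i z a^j$ with $z$ neither beginning nor ending in $a$, then applies Theorem~\ref{pelto}(a) to the privileged prefix $a^i$ and the privileged suffix $a^j$ to obtain $j \geq i$ and $i \geq j$. Your version is slightly more explicit about the edge cases ($p=0$, $w$ a pure power of $a$), which the paper's decomposition glosses over, but the core argument is identical.
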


\begin{proof}
Write $w = a^i z a^j$ where $z$ neither begins nor ends in $a$.
Then by Theorem~\ref{pelto} (a) we see that $i \geq j$ and $j \geq i$.
\end{proof}

We now state a useful lemma.

\begin{lemma}
Let $w$ be a nonempty word.  Then $w$ is privileged if and only
if its longest proper privileged prefix is also a suffix of $w$.
\label{fors}
\end{lemma}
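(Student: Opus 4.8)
The plan is to prove the two implications separately. First note that the statement even makes sense: every nonempty word $w$ has a longest proper privileged prefix, because the empty word has length $\le 1$, hence is a privileged proper prefix of $w$, and among the finitely many proper prefixes of $w$ there is a unique privileged one of maximal length; call it $p$. The forward direction is then immediate from Theorem~\ref{pelto}: if $w$ is privileged, then $p$ is a privileged prefix of $w$, so by Theorem~\ref{pelto}(a) it is also a suffix of $w$.

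For the reverse direction I would argue by strong induction on $|w|$. If $|w| \le 1$ then $w$ is privileged by definition, so suppose $|w| \ge 2$, assume the lemma for all shorter nonempty words, and suppose that $p$ is a suffix of $w$. Then $p$ is privileged and is a border of $w$ (a prefix by definition of $p$, a suffix by hypothesis), so to show $w$ is privileged it suffices to show that $p$ occurs exactly twice in $w$. It occurs at least twice, as a prefix and as a suffix, and these two occurrences are distinct since $p$ is a proper prefix. Suppose, for a contradiction, that $p$ occurs a third time; that occurrence must begin at some position $k$ with $0 < k < |w| - |p|$. Let $w'$ be the prefix of $w$ of length $k + |p|$; then $|p| < |w'| < |w|$, and $p$ is both a prefix and a suffix of $w'$, since the third occurrence ends exactly at the last position of $w'$.

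The key observation is that $p$ is in fact the longest proper privileged prefix of $w'$: it is certainly a proper privileged prefix of $w'$, and any privileged prefix $q$ of $w'$ with $|p| < |q| < |w'|$ would be a proper privileged prefix of $w$ strictly longer than $p$, contradicting the choice of $p$. Since $w'$ is a nonempty word shorter than $w$ whose longest proper privileged prefix is a suffix of $w'$, the induction hypothesis yields that $w'$ is privileged. But then $w'$ is itself a proper privileged prefix of $w$ with $|w'| > |p|$, again contradicting the maximality of $p$. Hence $p$ occurs exactly twice in $w$, so $w$ is privileged, completing the induction.

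The forward direction being essentially free, the only real work is in the reverse direction, and the step I expect to need the most care is the occurrence bookkeeping: checking that a ``third'' occurrence of $p$ must start strictly between positions $0$ and $|w| - |p|$, and that truncating $w$ at the end of that occurrence produces a strictly shorter word for which $p$ remains \emph{exactly} the longest proper privileged prefix, so that the induction hypothesis applies. Everything else reduces to Theorem~\ref{pelto} and the induction hypothesis; I do not expect to need Corollary~\ref{cor-leading}.
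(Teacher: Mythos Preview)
Your argument is correct. The forward direction matches the paper's exactly, and your reverse direction is sound: the bookkeeping on the third occurrence is fine (since $|p|\ge 1$ once $|w|\ge 2$, the prefix and suffix occurrences are distinct, and any further occurrence sits strictly between them), and the observation that $p$ remains the longest proper privileged prefix of the truncation $w'$ is exactly what makes the induction go through.

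The paper's reverse direction is organised a bit differently and avoids induction altogether. Instead of assuming a third occurrence and truncating, it lets $v$ be the \emph{shortest} prefix of $w$ containing exactly two occurrences of $p$; by construction $p$ is a privileged border of $v$ occurring exactly twice, so $v$ is privileged directly from the definition, with no appeal to shorter instances of the lemma. Then either $v=w$ and we are done, or $v$ is a longer proper privileged prefix of $w$, contradicting the choice of $p$. Your proof would collapse to this one if, among the extra occurrences of $p$, you chose the earliest; then $p$ occurs exactly twice in your $w'$ and no induction is needed. So the two arguments share the same contradiction mechanism (produce a longer privileged proper prefix), but the paper's choice of ``shortest prefix with two occurrences'' gives privilegedness of the intermediate word for free, while your choice of an arbitrary intermediate occurrence forces the induction.
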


\begin{proof}
$\Longrightarrow$:  follows from Theorem~\ref{pelto} (a) above.  

$\Longleftarrow$: Let $u$ be the longest proper privileged
prefix of $w$.  Let $v$ be the shortest prefix of $w$ containing
exactly two occurrences of $u$; this is well-defined 
since $u$ is a suffix of $w$.  Then $v$ itself is privileged.
So either $v = w$, or $|u| < |v| < |w|$ and $v$ is a longer
proper privileged prefix of $w$, a contradiction.
\end{proof}

We now prove a result on powers and privileged words.

\begin{theorem}
Let $w$ be any word and $k$ an integer $\geq 1$.  If $w^k$
is privileged, then $w^j$ is privileged for all integers $j \ge 0$.
\end{theorem}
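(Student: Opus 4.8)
The plan is to reduce to the case where $w$ is primitive and then to show directly that every power of a primitive privileged word is privileged. If $w=\varepsilon$ there is nothing to prove, so suppose $w\neq\varepsilon$ and write $w=z^m$, where $z$ is the primitive root of $w$ and $m\geq 1$. Then $w^j=z^{mj}$ for every $j$, so it is enough to show that $z^i$ is privileged for all $i\geq 0$. First note that $z$ itself is privileged: since $mk\geq 1$, the word $z$ is simultaneously a prefix and a suffix of $w^k=z^{mk}$, hence a border of it, and since $w^k$ is privileged, Theorem~\ref{pelto}(b) shows that $z$ is privileged. So the theorem reduces to the assertion that whenever $z$ is primitive and privileged, $z^i$ is privileged for all $i\geq 0$.

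The crucial step is the following claim, in which $n=|z|$: for every $i\geq 1$, any privileged prefix $u$ of $z^i$ with $|u|\geq n$ is of the form $z^q$. To prove it, observe that $z$ is a privileged prefix of $u$ (it equals the length-$n$ prefix of $u$, and $z$ is privileged), so by Theorem~\ref{pelto}(a) it is also a suffix of $u$. Writing $|u|=qn+r$ with $0\leq r<n$, we have $q\geq 1$; and if $r>0$, then reading off the last $n$ letters of $u$ inside $z^i$ identifies the length-$n$ suffix of $u$ with the cyclic rotation of $z$ by $r$ positions. Since that suffix must equal $z$, we would get $z$ equal to a nontrivial conjugate of itself, contradicting primitivity. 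Hence $r=0$ and $u=z^q$. Granting the claim, I would argue by minimal counterexample: suppose some power of $z$ is not privileged, and let $i$ be least with this property. Since $\varepsilon$ and $z$ are privileged, $i\geq 2$, and $z^{i-1}$ is privileged by minimality. Now $z^i$ is nonempty and $z^{i-1}$ is a proper privileged prefix of it of length $(i-1)n\geq n$, so the longest proper privileged prefix $u$ of $z^i$ satisfies $|u|\geq(i-1)n$; the claim gives $u=z^q$, so $q\geq i-1$, while $|u|<in$ gives $q\leq i-1$, and therefore $u=z^{i-1}$. But $z^{i-1}$ is also a suffix of $z^i$, so by Lemma~\ref{fors} the word $z^i$ is privileged, a contradiction. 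Hence $z^i$ is privileged for all $i\geq 0$, and so is $w^j=z^{mj}$ for all $j\geq 0$.

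I expect the claim above to be the main obstacle: one has to see that privilegedness, through Theorem~\ref{pelto}(a), forces every sufficiently long privileged prefix of $z^i$ to end exactly at a boundary between copies of $z$, and it is precisely here that primitivity of $z$ --- and hence the initial reduction to the primitive root --- is indispensable, since otherwise such a prefix could terminate in the middle of a copy of $z$. The remaining minimal-counterexample argument, which rests on Lemma~\ref{fors}, should then be routine.
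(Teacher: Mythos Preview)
Your argument is correct, but it takes a genuinely different route from the paper's. You first pass to the primitive root $z$ of $w$, and the heart of your proof is the structural claim that every privileged prefix of $z^i$ of length at least $|z|$ is a power of $z$; this is obtained from Theorem~\ref{pelto}(a) together with the standard fact that a primitive word is not equal to any nontrivial conjugate of itself (equivalently, that $z$ can occur in $z^i$ only at positions that are multiples of $|z|$). Once that claim is in place, the longest proper privileged prefix of $z^i$ is forced to be exactly $z^{i-1}$, and Lemma~\ref{fors} finishes with no case analysis.

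The paper, by contrast, never reduces to a primitive root. It works with $w$ directly and performs a case split on whether the longest proper privileged prefix $u$ of $w^j$ has $|u|\leq |w^{j-1}|$ or $|u|>|w^{j-1}|$; in the second case it shows, via Theorem~\ref{pelto} in both directions, that $u=yw^{j-1}$ for some proper prefix-and-suffix $y$ of $w$, hence $u$ is a suffix of $w^j$. Your approach is cleaner once the claim is in hand and yields the reusable fact that privileged prefixes of powers of a primitive word sit on block boundaries; the paper's approach is entirely self-contained and avoids invoking primitivity or conjugacy facts. Both arguments ultimately hinge on Theorem~\ref{pelto} and Lemma~\ref{fors} in the same way.
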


\begin{proof}
Suppose $k \geq 2$.  Then $w$ is a border of $w^k$, and hence
by Theorem~\ref{pelto} (b) we know $w$ is privileged.

It remains to show that if $w$ is privileged, then so is $w^j$
for all $j \geq 0$.  We prove this by induction on
$j$.  The result is clearly true for $j = 0$ or $j = 1$, so assume
$j \geq 2$ and $w^{j-1}$ is privileged.

Let $u$ be the longest proper privileged prefix of $w^j$.  
If $|u| \leq |w^{j-1}|$, then $u$ is also a privileged prefix of $w^{j-1}$.
Then Theorem~\ref{pelto} (a) and induction together imply that
$u$ is a suffix of $w^{j-1}$.   Then $u$ is also a suffix of $w^j$, and
by Lemma~\ref{fors} we know $w^j$ is privileged.

Otherwise $|u| > |w^{j-1}|$. Write $u=w^{j-1}y$
for some $y$, where $y$ is a proper prefix of $w$.  Since $j \geq 2$,
we see that $y$ is also a proper prefix of
$w^{j-1}$ and hence a proper prefix of $u$.  Thus $y$ is a border
of $u$, and hence, by Theorem~\ref{pelto} (b), $y$ is privileged.
Since $y$ is a privileged prefix
of $w$, by Theorem~\ref{pelto} (a), it is also a suffix of $w$.
Write $w=zy$ for some $z$. By induction we know that
$w^{j-1}$ is privileged.  Since $w^{j-1}$ is a prefix of $u$,
by Theorem~\ref{pelto} (a), it is also a suffix of $u$, so there exists
$x$ such that $u = x w^{j-1}$.  
Since $u=w^{j-1}y=xw^{j-1}$, we see that $|x| = |y|$ and $x$ is a 
proper prefix of $w$.  Thus in fact $x = y$.  So $u = y w^{j-1}$.
Then 
$$w^j=w \, w^{j-1}= (zy) w^{j-1}= z (y w^{j-1}) = zu,$$
and it follows that $u$ is a  suffix of $w^j$.   By Lemma~\ref{fors},
we conclude that $w^j$ is privileged.  This completes the induction.
\end{proof}

\section{The set of privileged words}

Let $\Sigma$ be a fixed alphabet and consider 
${\cal P}$, the set of privileged words over $\Sigma$.
We prove here that ${\cal P}$ is neither regular nor context-free.

\begin{proposition}
If $|\Sigma| \geq 2$, then ${\cal P}$ is not regular.
\end{proposition}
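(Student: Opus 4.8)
The plan is to reduce to a single non-regular sublanguage. The engine is a structural fact that I would prove first: for all $a,b \ge 0$, the word $1 0^a 1 0^b 1$ is privileged if and only if $a = b$. To see this I would determine the longest proper privileged prefix of $w = 1 0^a 1 0^b 1$. Apart from $\epsilon$ and $1$, every prefix of $w$ of length between $2$ and $a+1$ has the form $1 0^i$ with $i \ge 1$, and such a word is not privileged: it begins with $1$ and ends with $0$, so any nonempty border (being both a prefix and a suffix) would have to begin with $1$ and hence equal the whole word, while the empty border occurs more than twice. The identical argument shows no prefix $1 0^a 1 0^j$ with $1 \le j \le b$ is privileged. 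Since $1 0^a 1$ is privileged (it has privileged border $1$ occurring exactly twice), the longest proper privileged prefix of $w$ is $1 0^a 1$. By Lemma~\ref{fors}, $w$ is privileged iff $1 0^a 1$ is a suffix of $w$, and comparing positions in $1 0^a 1 0^b 1$ shows this holds precisely when $b = a$.

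Granting this, I would finish with the pumping lemma. Assume ${\cal P}$ is regular and let $n$ be a pumping length. The word $z = 1 0^n 1 0^n 1$ is privileged (by the claim, or directly since $1 0^n 1$ is a privileged border occurring exactly twice in $z$). Write $z = u v t$ with $|uv| \le n$ and $v \ne \epsilon$. As the length-$n$ prefix of $z$ is $1 0^{n-1}$, either (i) $v \in 0^+$, or (ii) $u = \epsilon$ and $v = 1 0^k$ with $0 \le k \le n-1$. In case (i), $u v^2 t = 1 0^{n + |v|} 1 0^n 1$, which is not privileged by the claim since $n + |v| \ne n$. In case (ii), $u v^0 t = u t = 0^{\,n-k} 1 0^n 1$, which has $n - k \ge 1$ leading $0$'s and no trailing $0$'s, so it is not privileged by Corollary~\ref{cor-leading}. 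Either way the pumping lemma is violated, so ${\cal P}$ is not regular.

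I expect essentially all the difficulty to sit in the structural claim, and inside it in the assertion that the prefixes $1 0^a 1 0^j$ ($1 \le j \le b$) are not privileged — that is, in pinning down exactly which borders can occur. This is only a brief combinatorial case check, but it is where the real content lies; the pumping step afterward is routine. One can repackage the argument without the pumping lemma: intersect ${\cal P}$ with the regular language $1 0^* 1 0^* 1$; by the claim the intersection is $\{ 1 0^n 1 0^n 1 : n \ge 0 \}$, which is not regular, contradicting closure of the regular languages under intersection.
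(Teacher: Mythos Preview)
Your overall strategy is sound and the conclusion is correct, but the line ``the identical argument shows no prefix $10^a10^j$ with $1\le j\le b$ is privileged'' needs a small repair. For $10^i$ your reasoning works: the only suffix beginning with $1$ is the whole word, so there is no nonempty proper border at all. For $10^a10^j$, however, there \emph{is} another suffix beginning with $1$, namely $10^j$, and when $j\le a$ this is also a prefix and hence a genuine nonempty proper border. The fix is immediate---$10^j$ is not privileged by the case you already handled, so it cannot serve as the privileged border the definition requires---but the argument is not literally ``identical.'' Once this is patched, your determination of the longest proper privileged prefix and the appeal to Lemma~\ref{fors} are correct, and the pumping-lemma finish (including the case split and the use of Corollary~\ref{cor-leading} for the pump-down case) is clean.

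For comparison, the paper takes the same high-level route---intersect $\mathcal{P}$ with a simple regular language and show the result is non-regular---but with the witness language $0^+10^+$ instead of $10^*10^*1$. That choice makes the structural work nearly vanish: Corollary~\ref{cor-leading} alone forces any privileged $0^a10^b$ to have $a=b$, and the converse is a one-line check, so no prefix analysis via Lemma~\ref{fors} is needed. Your family $10^a10^b1$ is insensitive to Corollary~\ref{cor-leading} (leading and trailing $1$'s already match), which is exactly why you had to do the extra work by hand. The intersection repackaging you sketch at the end is precisely the paper's argument, just carried out with a bulkier witness language.
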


\begin{proof}
Let $0, 1$ be distinct letters in $\Sigma$.  Assume $\cal P$
is regular, and consider $L = {\cal P} \ \cap \ 0^+ 1 0^+$.
By Corollary~\ref{cor-leading} we have 
$L = \lbrace 0^n 1 0^n \ : \ n \geq 1 \rbrace$.
By the pumping lemma, $L$ is not regular, and hence neither is $\cal P$.
\end{proof}

\begin{proposition}
If $|\Sigma| \geq 2$, then ${\cal P}$ is not context-free.
\end{proposition}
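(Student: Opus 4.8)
The plan is to use the pumping lemma for context-free languages, intersecting $\mathcal{P}$ with a suitable regular language to isolate a small family of privileged words whose pumps fail to remain privileged. The key is to find a regular language $R$ such that $\mathcal{P} \cap R$ is a provably non-context-free set, analogous to how the previous proposition used $0^+10^+$ to extract $\{0^n10^n : n \ge 1\}$. A natural candidate is something like $0^+10^+10^+$, and I would first work out exactly which words of this form are privileged. By Corollary~\ref{cor-leading} a privileged word $0^i 1 0^j 1 0^k$ must have $i = k$. Then I would analyze the border structure: the longest proper privileged prefix must be a power of $0$, and pushing through the definition (or Lemma~\ref{fors}) I expect to find that $0^i 1 0^j 1 0^i$ is privileged essentially only when $j$ is constrained relative to $i$ — for instance, the border that occurs exactly twice should be $0^i 1 0^? 1 0^i$-like, forcing something like $j = i$ or $j \le i$, yielding a set such as $\{0^n 1 0^n 1 0^n : n \ge 1\}$ or $\{0^m 1 0^n 1 0^m : 1 \le n \le m\}$.

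The concrete steps, in order, would be: (1) fix distinct letters $0,1 \in \Sigma$ and choose the intersecting regular language $R$ (I would try $0^+ 1 0^+ 1 0^+$ first); (2) determine $\mathcal{P} \cap R$ exactly, using Corollary~\ref{cor-leading} to kill the asymmetric cases and using the privileged-border definition together with Lemma~\ref{fors} to pin down the surviving parameter relation; (3) argue that the resulting language $L$ is not context-free, by the pumping lemma — pick the pumping-lemma constant $p$, take a word like $0^p 1 0^p 1 0^p$, and check that every way of writing it as $uvwxy$ with $|vwx| \le p$ and $|vx| \ge 1$ produces, when pumped, a word that violates the parameter relation (or lands outside $0^+10^+10^+$ altogether, or violates Corollary~\ref{cor-leading}); (4) conclude that since context-free languages are closed under intersection with regular languages, $\mathcal{P}$ itself cannot be context-free.

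The main obstacle I anticipate is step (2): correctly characterizing which words $0^i 1 0^j 1 0^i$ are privileged. This requires a careful hands-on analysis of what the unique twice-occurring privileged border can be. One has to track the longest proper privileged prefix (a block of $0$'s, of length equal to the number of leading zeros) and verify via Lemma~\ref{fors} whether it reappears as a suffix with the right occurrence count, which may force a clean relation like $j = i$ but could instead give a messier condition requiring a sub-case split. If $0^+10^+10^+$ turns out to give something awkward, a fallback is to use $0^+ 1 0^+ 1$ or $1 0^+ 1 0^+ 1$, or to append a fixed suffix/prefix to control the border structure; but I expect $0^n 1 0^n 1 0^n$ (or a close variant) to emerge, and once $L$ is identified as a language of this "triple-counting" shape, step (3)'s pumping argument is routine — pumping inside any one or two of the three $0$-blocks breaks the equality of the outer blocks or the middle-to-outer relation.
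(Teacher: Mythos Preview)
Your plan has a real gap at step~(2): the intersection $\mathcal{P}\cap 0^+10^+10^+$ is \emph{not} what you hope.  In fact
\[
\mathcal{P}\cap 0^+10^+10^+ \;=\; \{\,0^i 1 0^j 1 0^i : i,j\ge 1\,\},
\]
which \emph{is} context-free.  To see this, take $w=0^i10^j10^i$.  If $j<i$ then the longest proper privileged prefix of $w$ is $0^i$ (every longer prefix has unequal leading/trailing $0$-counts), and $0^i$ is a suffix, so Lemma~\ref{fors} gives $w$ privileged.  If $j\ge i$ then the longest proper privileged prefix is $0^i10^i$ (again by leading/trailing $0$-count, no other prefix of the form $0^i10^\ell$ or $0^i10^j10^\ell$ with $\ell<i$ can be privileged), and since $j\ge i$ the suffix of length $2i+1$ of $w$ is exactly $0^i10^i$; Lemma~\ref{fors} again applies.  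So every $0^i10^j10^i$ is privileged, and there is no ``triple-counting'' constraint to exploit.

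Your fallbacks fare no better: $\mathcal{P}\cap 0^+10^+1=\emptyset$ by Corollary~\ref{cor-leading}, and $\mathcal{P}\cap 10^+10^+1=\{10^n10^n1:n\ge1\}$ (by the same Lemma~\ref{fors} analysis), which is also context-free.  The missing idea is to break the left/right symmetry of the pattern so that the longest proper privileged prefix is forced to stay at $0^i$ rather than extend to $0^i10^i$.  The paper does this by taking $R=0^+10^+110^+$: the doubled~$1$ on the right prevents $0^{a+1}10^{a+1}$ from ever being a suffix, and one computes $\mathcal{P}\cap R=\{0^{a+1}10^{b+1}110^{a+1}:a>b\ge0\}$, where the strict inequality $a>b$ together with $a=c$ gives a genuinely non-context-free language.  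The paper then finishes with Ogden's lemma (marking the first $0$-block), which streamlines the case analysis compared to the bare pumping lemma.
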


\begin{proof}
Assume ${\cal P}$ is context-free, and consider 
the regular language $R=0^+10^+110^+$.
By a well-known closure property of the context-free languages,
$L := {\cal P} \ \cap \ R$ is context-free.
We will now use Ogden's lemma  \cite{Ogden} to show that
$L$ is not context-free, a contradiction.

We claim that 
$$L = \{00^a100^b1100^c \ : \  a=c \text{ and } a>b \}.$$

To see this, note that $L \subseteq R$. Thus it
suffices to show that a word $w$ of the form $0^{a+1} 1 0^{b+1} 11 0^{c+1}$
word is privileged if and only if $a=c$ and $a>b$.

($\Rightarrow$) Since $w$ begins and ends with $0$,
by Corollary~\ref{cor-leading}, we know that $a+1 = c+1$ and so $a = c$.
Suppose $b \ge a$. Then
$0^{a+1}10^{a+1}$ is a privileged prefix of $w$, yet it is not a
suffix of $w$.  By Theorem~\ref{pelto} (a), $w$ is not privileged.
Thus $a > b$.

($\Leftarrow$) Let $w=00^a100^b1100^a$ where $a>b$. Then the longest
proper privileged prefix of $w$ is $0^{a+1}$, which appears again as a
suffix of $w$. Thus $w$ is privileged.

Now let $n$ be as in Ogden's lemma, and let
$w=\underline{0^n} 10^{n-1}110^{n}$, where the first block of $n$ zeros
is marked as required by Ogden's lemma.  Then there exists some
decomposition $w=uxvyz$ where $xvy$ contains at most $n$ `marked'
characters, $xy$ contains at least 1 `marked' character, and $ux^ivy^iz
\in L$ for all $ i \ge 0$.

We see that if either $x$ or $y$ contain a 1, then $ux^0vy^0z$ will
have too few ones, and thus will not be in $L$.
Otherwise, we know $x$ lies entirely in the first block of zeros. If
$y$ does not lie in the last block of zeros, then if $i=0$, we will
have $a<c$, so $ux^0vy^0z \notin {\cal P} \cap R$. If $y$ does lie in
the last block of zeros, then $ux^0vy^0z=00^{n-j}100^{n-1}1100^{n-k}$
for some $j,k > 0$. Since $n-j \le n-1$, we see that $w \notin L$.

Hence no decomposition for $w$ exists with
$ux^0vy^0z \in L$, and thus ${\cal P} \cap R$ is not context-free.
Thus, the language of privileged words is not context-free.
\end{proof}

\section{A linear-time algorithm for determining if a word is 
privileged}

In this section we present an efficient algorithm for determining
if a given word is privileged.

Algorithm P:
\begin{algorithmic}
\Function{check-privileged}{$w$}
    	\If {$|w| \leq 1$}
    		\State \Return True
	\Else
		\State $T[0] \gets 0$
		\State $p \gets 1$
		\For{$i=1$ to $|w|-1$}
			\State $j \gets T[i-1]$
			\While{true}
				\If {$w[j] = w[i]$}
					\State $T[i] \gets j+1$
					\If {$T[i] = p$}
						\State $p \gets i+1$
					\EndIf
					\State exit while loop
				\ElsIf{$j = 0$}
					\State $T[i] \gets 0$
					\State exit while loop
				\EndIf
				\State $j \gets T[j-1]$
			\EndWhile
		\EndFor
		\If{$p = |w|$}
			\State \Return True
		\Else
			\State \Return False
		\EndIf
	\EndIf
\EndFunction
\end{algorithmic}

Our algorithm is a slightly modified version of the algorithm for
building a failure table in the well-known Knuth-Morris-Pratt
linear-time string-matching algorithm \cite{Knuth}.

\begin{theorem}
Algorithm P returns ``true'' if and only if $w$ is privileged. 
\end{theorem}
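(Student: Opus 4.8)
The plan is to show that Algorithm P maintains two invariants that together give the result. The first invariant concerns the array $T$: after the loop iteration for index $i$ completes, $T[i]$ holds the length of the longest proper border of the prefix $w[0..i]$. This is precisely the correctness of the KMP failure-function computation, so I would cite \cite{Knuth} for it rather than reprove it, noting only that the inner \textsc{while} loop walks down the chain of borders (via $j \gets T[j-1]$) exactly as in the standard construction. The second, and new, invariant concerns the variable $p$: I claim that at the end of iteration $i$, the value $p$ equals the length of the longest privileged prefix of $w[0..i]$, \emph{provided} that prefix occurs as a suffix of $w[0..i]$; more precisely, $p$ is the length of the longest prefix of $w[0..i]$ that is privileged and occurs exactly twice in $w[0..i]$ (counting the trivial double occurrence when the whole prefix has a border equal to it).

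The key steps, in order, would be: (1) State and invoke the KMP invariant for $T$. (2) Prove the invariant for $p$ by induction on $i$. Initially $p = 1$, corresponding to the length-$1$ privileged prefix $w[0]$. In iteration $i$, after $T[i]$ is computed, the test ``\textbf{if} $T[i] = p$'' checks whether the current ``candidate privileged prefix'' of length $p$ — which by the induction hypothesis is privileged — reappears: since $T[i]$ is the longest border of $w[0..i]$, having $T[i] = p$ means $w[0..p-1]$ is a border of $w[0..i]$, hence (being a privileged word of length $p$ occurring as both a prefix and suffix, and with no \emph{longer} border, so with exactly two occurrences in $w[0..i+1]$ when... ) — here I must argue that the new prefix $w[0..i]$, call it $v$, then satisfies the definition of privileged: its border $w[0..p-1]$ is privileged and occurs exactly twice in $v$. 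The ``occurs exactly twice'' part is the delicate point and needs Lemma~\ref{fors}: because we only ever advance $p$ to the length of the \emph{current} prefix, and the candidate of length $p$ was the longest privileged proper prefix seen so far, an application of Lemma~\ref{fors} shows the new prefix is privileged exactly when its longest proper privileged prefix (which is the length-$p$ candidate) is a suffix of it, i.e. exactly when $T[i] = p$ (using that $T[i]$ is the \emph{longest} border, and any border of length $\ge p$ would itself be privileged by Theorem~\ref{pelto}(b) and longer than the claimed longest privileged proper prefix, forcing it to equal the whole of $w[0..p-1]$ only if $p$ equals that border length). (3) Conclude: after the final iteration, $p = |w|$ iff the longest privileged prefix of $w$ equals $w$ itself, i.e. iff $w$ is privileged; this is exactly what the final \textbf{if} tests.

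The main obstacle I anticipate is step (2)'s claim that ``$T[i] = p$ exactly when $w[0..i]$ is privileged.'' One direction — if $w[0..i]$ is privileged then by Lemma~\ref{fors} its longest proper privileged prefix $u$ is a suffix, and I need that $|u| = p$ (the inductively-maintained value) and that $u$ being a suffix forces $T[i] \ge |u| = p$; combined with the fact that any border of length $> p$ would be a privileged proper prefix longer than $u$ (contradiction), this gives $T[i] = p$. The other direction — if $T[i] = p$ then $w[0..p-1]$ is a border of $w[0..i]$, it is privileged by the induction hypothesis, and it is the \emph{longest} proper privileged prefix of $w[0..i]$ (any longer one would have length $\le T[i] = p$ as a border... one must rule out privileged proper prefixes that are not borders, which is immediate since a privileged prefix that is also... actually by Theorem~\ref{pelto}(a) every privileged prefix of a privileged word is a border, but $w[0..i]$ is not yet known privileged — so instead use that the induction maintained $p$ as the longest privileged prefix period). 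I would be careful to phrase the $p$-invariant as ``$p$ is the length of the longest privileged prefix of $w[0..i]$'' (dropping the suffix condition, since every prefix is trivially a prefix of itself), and then the suffix/border condition is exactly what the $T[i] = p$ test detects, with Lemma~\ref{fors} bridging ``longest privileged prefix is a suffix'' and ``is privileged.'' Getting the quantifier bookkeeping exactly right in this induction is the real work; everything else is a citation to KMP or a direct appeal to Lemma~\ref{fors} and Theorem~\ref{pelto}.
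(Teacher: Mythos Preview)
Your proposal is correct and follows essentially the same route as the paper: both maintain the invariant that after iteration $i$ the variable $p$ equals the length of the longest privileged prefix of $w[0..i]$, with $T$ being the ordinary KMP failure array, and conclude from $p=|w|$ at the end. The only minor difference is that where the paper argues directly from the definition (``$p$ is advanced the \emph{first} time the length-$p$ prefix reoccurs as a suffix, so it occurs exactly twice''), you instead route the inductive step through Lemma~\ref{fors} and Theorem~\ref{pelto}(b); your version is in fact more careful about the converse direction (why $T[i]\neq p$ forces $w[0..i]$ not to be privileged), which the paper leaves largely implicit.
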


\begin{proof}
It is easy to see that if $|w|=0$ or $|w|=1$, then $w$ is privileged
and the algorithm returns ``true''. Otherwise, we consider the value for
$p$ at each iteration of the for-loop.

We now claim that at the end of each iteration of the for-loop, $p$
equals the length of the longest privileged prefix of the first $i+1$
characters of $w$.

To see the claim, observe that,
when entering the first loop we have $p=1$, and is the longest privileged
prefix of the first character of $w$. This establishes our base case.
Otherwise, we assume $p$ is the longest privileged prefix of the first
$i$ characters of $w$ at the beginning of the for loop, and prove our
claim for the end of this iteration. We note that $T[i]$ represents the
length of the longest subword $u$ which is both a prefix and suffix of
the first $i+1$ characters of $w$ (the word ``read so far''). If
$T[i]=p$, we know $u$ is privileged, and $p$ is increased to $i+1$.
Since $p$ is increased as soon as this equality is found, this is the
first time $u$ is repeated in $w$, and thus the word read so far is
privileged. This proves our claim.

After $w$ has been completely read by our algorithm, $p$ represents
the length of the longest privileged prefix of $w$. The algorithm
returns ``true'' if and only if $p=|w|$, in which case $w$ is privileged.
\end{proof}

Next, we have

\begin{theorem}
Algorithm P runs in $O(n)$ time, where $n = |w|$.
\end{theorem}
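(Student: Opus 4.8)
The plan is to adapt the classical amortized analysis of the Knuth--Morris--Pratt failure-table construction, since Algorithm~P is essentially that construction with one extra constant-time test ($T[i]=p$) and possible update ($p \gets i+1$) per iteration of the for-loop. First I would observe that everything outside the for-loop --- the length check, the initializations, and the final comparison --- takes $O(1)$ time, and the for-loop runs $n-1$ times, so it suffices to bound the total work done inside all iterations of the for-loop, i.e., the total number of executions of the body of the inner \texttt{while} loop.

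The key step is a potential-function (amortized) argument on the quantity $j$. At the start of iteration $i$ we set $j \gets T[i-1]$; each pass through the while loop that does not exit either replaces $j$ by $T[j-1] < j$ (strictly decreasing $j$ by at least $1$), or exits. So within a single iteration, the number of while-loop passes is at most (the initial value of $j$) minus (the final value of $j$) plus $1$. I would then track how $j$ (equivalently, the entry $T[i]$ that gets stored) evolves across iterations: when iteration $i$ finishes, $T[i] \le j+1$ where $j$ is the value at exit, and iteration $i+1$ starts with $j = T[i]$. Hence across the whole run, the value starting each iteration increases by at most $1$ per iteration (so by at most $n$ total over all iterations), while each non-exiting while-pass decreases it by at least $1$. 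Since the value stays non-negative throughout, the total number of decreasing while-passes over the entire execution is at most $n$, and adding one ``exit'' pass per iteration gives at most $2n = O(n)$ total while-loop executions.

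Finally I would note that the extra code relative to KMP --- the conditional $T[i]=p$ and the assignment $p \gets i+1$ --- adds only $O(1)$ work per for-loop iteration and does not affect the control flow of the inner loop, so it does not change the bound. Combining, Algorithm~P runs in $O(n)$ time.

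The main obstacle is stating the amortization cleanly: one must be careful that the "credit" spent in the inner while loop is genuinely paid for by the at-most-one increment of $T[i]$ per outer iteration, and that $j$ never goes negative (which holds because the loop exits when $j=0$). Once the invariant ``$j$ is non-negative and increases by at most $1$ per for-iteration'' is pinned down, the counting is routine.
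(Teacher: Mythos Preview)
Your proposal is correct and is essentially the same amortized analysis the paper gives: both arguments observe that the only nontrivial cost is the number of inner \texttt{while}-loop executions, and both bound this by noting that $j$ can increase by at most $1$ per for-loop iteration while each non-exiting while-pass strictly decreases it, yielding $O(n)$ total. The paper phrases this as ``either $i$ increases or $i-j$ increases'' rather than as a potential on $j$, but that is only a cosmetic difference.
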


\begin{proof}
Starting with the KMP algorithm,
we have added one extra \emph{if} statement in the main loop,
allowing this algorithm to run in the same
$O(|w|)$ time bound as the original algorithm.  

More formally, we consider the number of times the inner while loop is
executed, as all else takes constant time. The first time the while
loop is executed, $i=1$ and $j=0$. Upon each iteration, we see that
either
\begin{enumerate}
	\item $i$ is incremented by 1, and $j$ is incremented by at most 1;
	\item $j$ decreases
\end{enumerate}
We see $i$ is incremented by exactly 1 when $w[j]=w[i]$ or $j=0$, due
to moving to the next iteration of the for loop.  When $j=0$, then $j$
will remain 0 beginning the next execution of the while loop. When
$w[i]=w[j]$, then $j$ will be set to $j+1$ in the next execution of the
while loop.

If neither of the above cases are fulfilled, we see $j$ is set to
$T[j-1]$, which is known by a property of the failure array to be
strictly less than $j$.

With these cases, we see that either $i$ increases or $i-j$ increases.
Since the algorithm terminates when $i=|w|-1$, $i$ will increase
exactly $n-2$ times, where $n = |w|$.
Also, since $j<i$ at each stage of the algorithm,
$i-j$ can increase at most $n-3$ times. Since these are the only
possible cases, the while loop will execute no more than $2n-5$ times.
Thus, Algorithm P takes $O(n)$ time to complete.
\end{proof}

\section{A lower bound on the number of privileged binary words}

Let $B(n)$ denote the number of privileged binary words of
length $n$.

We observe that if $x = 0^t 1 w 1 0^t$, and $w$ contains no
occurrences of $0^t$, then $x$ is privileged.  By choosing the
appropriate value of $t$, we get our lower bound.
First, though, we need a detour into
generalized Fibonacci sequences.

We need to count the number of words of length $n$ that contain
no occurrence of $0^t$.  As is well-known \cite[p.\ 269]{Knuth0}
and easily proved, this is
$G_n^{(t)}$, where
$$ G_n^{(t)} = \begin{cases}
	2^n, & \text{if $0 \leq n < t$}; \\
	G_{n-1}^{(t)} + G_{n-2}^{(t)} + \cdots + G_{n-t}^{(t)},
		& \text{if $n \geq t$}.
	\end{cases}
$$
We point out that
in the case where $t = 2$, this is $F_{n+2}$, the $(n+2)$'nd Fibonacci number,
where $F_0 = 0$, $F_1 = 1$, and $F_n = F_{n-1} + F_{n-2}$.  

It is well-known from the theory of linear recurrences that 
$$G_n^{(t)} = \Theta(\gamma_t^n),$$
where $1 < \gamma_t < 2$ is the root of the equation
$x^t - x^{t-1} - \cdots - x - 1 = 0$.  Since
$\gamma_t^t - \gamma_t^{t-1} - \cdots - \gamma_t - 1 = 0$,
multiplying by $\gamma_t - 1$ we get
$\gamma_t^{t+1} - 2 \gamma_t^t + 1 = 0$, so $\gamma_t = 2- \gamma_t^{-t}$.

The next step is to
find a good lower bound on $\gamma_t$.

\begin{lemma} 
Let $s \geq 2$ be an integer and let $\beta$ be a real number with
$0 \leq \beta \leq {6 \over s}$.  Then
$$ 2^s - \beta s 2^{s-1} \leq (2-\beta)^s .$$
\label{ineq-lem}
\end{lemma}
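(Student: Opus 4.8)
The plan is to normalize the inequality and recognize it as essentially Bernoulli's inequality. Dividing both sides of the claimed bound by $2^s$ and writing $\beta = 2x$, so that $0 \le x \le 3/s$, the statement becomes
$$ (1-x)^s \geq 1 - sx, $$
since $2^s(1-x)^s = \bigl(2(1-x)\bigr)^s = (2-\beta)^s$ and $2^s(1 - sx) = 2^s - s\beta\,2^{s-1}$. Thus it suffices to prove $(1-x)^s \geq 1 - sx$ for every integer $s \geq 2$ and every real $x$ with $0 \le x \le 3/s$.

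For $s \geq 3$ we have $3/s \leq 1$, so $x$ ranges over a subinterval of $[0,1]$, and $(1-x)^s \geq 1-sx$ is exactly Bernoulli's inequality applied with base $1-x \geq 0$. For $s = 2$ the bound $3/s = 3/2$ exceeds $1$, so Bernoulli in its usual form does not literally apply; but here one simply expands $(1-x)^2 = 1 - 2x + x^2 \geq 1 - 2x$, valid for all real $x$. Alternatively I would handle both cases at once by setting $f(x) = (1-x)^s - 1 + sx$, noting $f(0) = 0$, and checking $f'(x) = s\bigl(1 - (1-x)^{s-1}\bigr) \geq 0$ throughout $[0, 3/s]$: for $s \geq 3$ this holds because $0 \leq 1-x \leq 1$ on that interval, and for $s = 2$ it reads $f'(x) = 2x \geq 0$. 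Either way $f$ is nondecreasing on $[0,3/s]$, hence nonnegative, which gives the claim after undoing the normalization.

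The only point needing care is the mismatch between the hypothesis $\beta \leq 6/s$ (equivalently $x \leq 3/s$) and the usual domain $x \leq 1$ of Bernoulli's inequality, which is an issue only in the single case $s = 2$ and is dispatched by the trivial expansion above; so I do not expect a genuine obstacle here. The particular constant $6$ in $\beta \leq 6/s$ is presumably chosen for the later application to $\gamma_t$ and plays no essential role in the proof of the lemma itself.
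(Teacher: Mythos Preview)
Your proof is correct and considerably simpler than the paper's. After your normalization, the inequality is precisely Bernoulli's inequality $(1-x)^s \ge 1 - sx$, and your case split ($s=2$ by direct expansion, $s\ge 3$ via Bernoulli or the one-line derivative check) cleanly covers the full range $0 \le x \le 3/s$.

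The paper instead expands $(2-\beta)^s$ by the binomial theorem, groups the alternating tail into consecutive pairs
\[
2^{s-2j}\beta^{2j}\binom{s}{2j} - 2^{s-2j-1}\beta^{2j+1}\binom{s}{2j+1},
\]
and shows each pair is nonnegative using the hypothesis $\beta \le 6/s$. That argument genuinely needs a bound of this shape on $\beta$ to control the ratio $\binom{s}{2j+1}/\binom{s}{2j}$, whereas your route via Bernoulli shows the inequality actually holds for all $\beta \in [0,2]$ (and for $s=2$ even beyond), confirming your remark that the constant $6$ is an artifact of the downstream application rather than the lemma itself. Your approach is both shorter and yields a sharper statement.
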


\begin{proof}
For $ s = 2$, the claim is $4-4\beta \leq (2-\beta)^2 = 4 - 4\beta + \beta^2$.
Otherwise, assume $s \geq 3$.  The result is clearly true for $\beta = 0$,
so assume $\beta > 0$.
By the binomial formula, we have
\begin{eqnarray}
(2-\beta)^s &=& \sum_{0 \leq i \leq s} 2^{s-i} (-\beta)^i {s \choose i} 
	\notag \\
&=& 2^s - \beta s 2^{s-1} + \sum_{2 \leq i \leq s} 2^{s-i} (-\beta)^i
	{s \choose i} \notag \\
&=& 2^s - \beta s 2^{s-1} +
	\sum_{1 \leq j \leq (s-1)/2} \left( 2^{s-2j} \beta^{2j} {s \choose {2j}}
		- 2^{s-2j-1} \beta^{2j+1} {s \choose {2j+1}} \right)
	\label{s1} \\
&+&	\begin{cases} \beta^s, & \text{if $s$ even}; \\
	0, & \text{otherwise.}  
	\end{cases} \notag
\end{eqnarray}

It therefore suffices to show that each term of the sum
\eqref{s1} is positive, or, equivalently, that
$$ 2^{s-2j} \beta^{2j} {s \choose {2j}} \geq 
2^{s-2j-1} \beta^{2j+1} {s \choose {2j+1}} .$$
for $1 \leq j \leq (s-1)/2$.

Now $\beta \leq {6 \over s}$ by hypothesis, so
$\beta \leq {6 \over {s-2}}$.  Hence $\beta s -2\beta \leq 6$.
Adding $2\beta - 2$ to both sides we get
$\beta s - 2 \leq 4 + 2\beta$,
and so ${{\beta s - 2} \over {2 + \beta}} \leq 2$.
If $i \geq 2 \geq {{\beta s - 2} \over {2 + \beta}}$ then
$(2+\beta)i \geq \beta s - 2$, so
$2(i+1) \geq \beta(s-i)$, and
$${2 \over \beta} 
\geq {{s-i} \over {i+1}} =
{{s \choose {i+1}} \over {s \choose i}} .$$
Thus $2 {s \choose i} \geq \beta {s \choose {i+1}}$.
Let $i = 2j$, and multiply both sides by $2^{s-2j} \beta^{2j}$ to get
$2^{s-2j} \beta^{2j} {s \choose {2j}} \geq 2^{s-2j-1} \beta^{2j+1} {s
\choose {2j+1}}$, which is what we needed.
\end{proof}

\begin{theorem}
Let $t \geq 2$ be an integer and define
$$\alpha_t = 2 - {1 \over {2^t - {t\over 2} - {t^2 \over {2^t}}}} .$$
Then $\alpha_t \leq 2 - \alpha_t^{-t}$.
\label{alph-thm}
\end{theorem}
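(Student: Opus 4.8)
The plan is to reduce the claim, via Lemma~\ref{ineq-lem}, to the elementary fact that $2^{t-1} \geq t$ for $t \geq 2$.

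First I would set $D = 2^t - \frac{t}{2} - \frac{t^2}{2^t}$ and $\beta = 1/D$, so that $\alpha_t = 2 - \beta$ and $2 - \alpha_t = \beta = 1/D$. A routine estimate (using $2^{t-1} \geq t$) shows $D \geq 2^{t-1} \geq 2 > 0$ for $t \geq 2$, so $\alpha_t = 2 - 1/D$ lies in $[\frac{3}{2},2)$ and in particular is positive. Rearranging the target inequality $\alpha_t \leq 2 - \alpha_t^{-t}$ gives $\alpha_t^{-t} \leq 2 - \alpha_t = 1/D$, and since both sides are positive this is equivalent to $\alpha_t^{\,t} \geq D$. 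That is the inequality I would prove.

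Next I would invoke Lemma~\ref{ineq-lem} with $s = t$ and the $\beta$ above. Its hypothesis requires $0 \leq \beta \leq 6/t$, i.e.\ $D \geq t/6$, which is immediate from $D \geq 2^{t-1}$. The lemma then gives
$$\alpha_t^{\,t} = (2-\beta)^t \geq 2^t - \beta\, t\, 2^{t-1},$$
so it suffices to show $2^t - \beta\, t\, 2^{t-1} \geq D$. Substituting $\beta = 1/D$ and multiplying by $D > 0$, this becomes $D(2^t - D) \geq t\,2^{t-1}$. Since $2^t - D = \frac{t}{2} + \frac{t^2}{2^t}$, plugging in $D$ and expanding causes the $t\,2^{t-1}$ terms to cancel, and the inequality collapses to $2^{2t} \geq (2^{t-1}+t)^2$, that is, to $2^{t-1} \geq t$ — which holds for every integer $t \geq 2$ (with equality at $t=2$).

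The only genuine work is the bookkeeping in this final simplification and the routine check of the hypothesis $\beta \leq 6/t$; once Lemma~\ref{ineq-lem} is in hand there is no conceptual obstacle. I would close by remarking that, because the map $x \mapsto x - 2 + x^{-t}$ is convex on $(0,\infty)$ and vanishes at $x = 1$ and at $x = \gamma_t$, the inequality $\alpha_t \leq 2 - \alpha_t^{-t}$ is precisely the assertion $\alpha_t \leq \gamma_t$, which is the lower bound on $\gamma_t$ needed to complete the estimate for $B(n)$.
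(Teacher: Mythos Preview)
Your proof is correct and follows essentially the same route as the paper: set $\beta = 1/D$, apply Lemma~\ref{ineq-lem} with $s=t$ to bound $(2-\beta)^t$ from below, and reduce the remaining inequality $D(2^t-D)\geq t\,2^{t-1}$ to an elementary fact. The paper starts that last step from the inequality $\tfrac{3t^2}{4}\geq \tfrac{t^3}{2^t}+\tfrac{t^4}{2^{2t}}$, which (after dividing by $t^2$ and factoring the quadratic in $t/2^t$) is exactly your $2^{t-1}\geq t$; your formulation is the more transparent of the two, but the arguments are the same in substance.
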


\begin{proof}
It is easy to verify that
$$ {{3t^2} \over 4} \geq {{t^3} \over {2^t}} + {{t^4}\over {2^{2t}}} $$
for all real $t \geq 2$.  Hence
$$ 0 \leq {{3t^2} \over 4} - {{t^3} \over {2^t}} - {{t^4}\over {2^{2t}}},$$
and, adding $t2^{t-1}$ to both sides, we get
\begin{eqnarray*}
t 2^{t-1} &\leq& t 2^{t-1} +{{3t^2} \over 4} - {{t^3} \over {2^t}} - {{t^4}\over {2^{2t}}} \\
&=& \left( {t \over 2} + {{t^2} \over {2^t}} \right) \left( 2^t - {t \over 2}
- {{t^2} \over {2^t}} \right) .
\end{eqnarray*}
Setting 
$\beta_t = {1 \over {2^t - {t \over 2} - {{t^2} \over {2^t }}}} $,
we therefore have
$$ \beta_t t 2^{t-1} \leq {t \over 2} + {{t^2} \over {2^t}} ,$$
or
$$ -\beta_t t 2^{t-1} \geq -{t \over 2} - {{t^2} \over {2^t}}.$$
Add $2^t$ to both sides to get
$$ 2^t - \beta_t t 2^{t-1} \geq 2^t - {t \over 2} - {{t^2} \over {2^t}}.$$
Now it is easily verified that $\beta_t \leq 6/t$ for $t\geq 2$, so
we can apply Lemma~\ref{ineq-lem} with $s = t$ to get
$2^t - \beta t 2^{t-1} \leq (2-\beta)^t$.
It follows that
$$ (2-\beta)^t \geq 2^t - {t \over 2} - {{t^2} \over {2^t}},$$
and so
$$ \beta_t \geq (2-\beta_t)^{-t}.$$
It follows that
$$ 2 - \beta_t \leq 2 - (2-\beta)^{-t}.$$
Since $\alpha_t = 2- \beta_t$, we get
$$\alpha_t \leq 2 - \alpha_t^{-t},$$
as desired.
\end{proof}

We can now apply this to get a bound on $G_n^{(t)}$.

\begin{corollary}
Let $t \geq 2$ be an integer and $n \geq 0$.
Then $G_n^{(t)} \geq \alpha_t^n$, where
$\alpha_t = 2 - {1 \over {2^t - {t\over 2} - {{t^2} \over {2^t}}}} < 2$.
\end{corollary}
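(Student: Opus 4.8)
The plan is to prove the bound $G_n^{(t)} \geq \alpha_t^n$ by induction on $n$, reducing the inductive step to the inequality $\alpha_t \leq 2 - \alpha_t^{-t}$ already established in Theorem~\ref{alph-thm}. First I would record two preliminary observations about $\alpha_t$: that $\alpha_t < 2$, since the denominator $2^t - t/2 - t^2/2^t$ is positive for all $t \geq 2$ (easily checked, with $2^t$ dominating), and that $\alpha_t > 1$, which will be needed later to multiply an inequality by $\alpha_t - 1$ without reversing it. The bound $\alpha_t < 2$ also disposes of the base cases of the induction.

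For the induction, the base cases are $0 \leq n < t$, where $G_n^{(t)} = 2^n \geq \alpha_t^n$ because $1 < \alpha_t < 2$. For the inductive step, suppose $n \geq t$ and that $G_m^{(t)} \geq \alpha_t^m$ for all $m < n$. Using the recurrence and the induction hypothesis,
$$ G_n^{(t)} = G_{n-1}^{(t)} + G_{n-2}^{(t)} + \cdots + G_{n-t}^{(t)} \geq \alpha_t^{n-1} + \alpha_t^{n-2} + \cdots + \alpha_t^{n-t}, $$
so it suffices to show that $\alpha_t^{n-1} + \cdots + \alpha_t^{n-t} \geq \alpha_t^n$; dividing through by $\alpha_t^{n-t}$, this is equivalent to
$$ \alpha_t^{t-1} + \alpha_t^{t-2} + \cdots + \alpha_t + 1 \geq \alpha_t^t, $$
i.e., to $\alpha_t^t - \alpha_t^{t-1} - \cdots - \alpha_t - 1 \leq 0$.

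To finish, I would multiply this last inequality by $\alpha_t - 1 > 0$ and use the algebraic identity $(x-1)(x^t - x^{t-1} - \cdots - x - 1) = x^{t+1} - 2x^t + 1$, so that the desired inequality becomes $\alpha_t^{t+1} - 2\alpha_t^t + 1 \leq 0$. Dividing by $\alpha_t^t > 0$ and rearranging, this says precisely $\alpha_t + \alpha_t^{-t} \leq 2$, that is, $\alpha_t \leq 2 - \alpha_t^{-t}$, which is exactly the content of Theorem~\ref{alph-thm}. This completes the induction and hence the proof. I do not expect a serious obstacle here: the quantitatively delicate estimate was already carried out in Lemma~\ref{ineq-lem} and Theorem~\ref{alph-thm}, and the only points requiring care are checking $\alpha_t > 1$ (so the sign of $\alpha_t - 1$ is correct when clearing it) and verifying the base cases.
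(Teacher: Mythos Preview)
Your proof is correct and follows essentially the same route as the paper's: induction on $n$, with the inductive step reduced via a geometric-sum computation to the inequality $\alpha_t \leq 2 - \alpha_t^{-t}$ from Theorem~\ref{alph-thm}. The only cosmetic difference is that the paper writes the sum in closed form $\frac{\alpha_t^n - \alpha_t^{n-t}}{\alpha_t - 1}$ and manipulates that, whereas you divide through by $\alpha_t^{n-t}$ and use the polynomial identity $(x-1)(x^t - \cdots - 1) = x^{t+1} - 2x^t + 1$; these are algebraically the same step, and your explicit note that $\alpha_t > 1$ makes the sign-handling clearer than the paper does.
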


\begin{proof}
By induction on $n$.  Clearly $G_n^{(t)} = 2^n \geq \alpha_t^n$ for
$0 \leq n < t$ by definition.  Otherwise we have
\begin{eqnarray*}
G_n^{(t)} &=& G_{n-1}^{(t)} + \cdots + G_{n-t}^{(t)} \\
&\geq & \alpha_t^{n-1} + \cdots + \alpha_t^{n-t} \\
& = & {{ \alpha_t^n - \alpha_t^{n-t}} \over {\alpha_t - 1 }} .
\end{eqnarray*}
However, $\alpha_t \leq 2 - \alpha_t^{-t}$ by
Theorem~\ref{alph-thm}, so
$$\alpha_t - 1 \leq 1 - \alpha_t^{-t} .$$
Hence $(\alpha_t - 1) \alpha_t^n \leq (1-\alpha_t^{-t}) \alpha_t^n
= \alpha_t^n - \alpha_t^{n-t}$,
so from above we have
$$ G_n^{(t)} \geq {{\alpha_t^n - \alpha_t^{n-t}} \over {\alpha_t - 1}} \geq \alpha_t^n. $$
\end{proof}

Now we state and prove our lower bound on the number of binary
privileged words of length $n$.

\begin{theorem}
There are at least
$$ {{2^{n-5} } \over {n^2}}$$
privileged binary words of length $n$.
\label{bound}
\end{theorem}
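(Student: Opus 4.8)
The plan is to exhibit explicit families of privileged words of length $n$ and to count them with the machinery developed above. I may assume $n \ge 13$, since for $n \le 12$ one has $2^{n-5}/n^2 \le 1$, and the bound then holds trivially because $0^n$ is privileged. The first step is to verify the observation stated just before Lemma~\ref{ineq-lem}: if $t \ge 1$ and $w$ is a binary word containing no factor $0^t$, then $x = 0^t 1 w 1 0^t$ is privileged. Here $0^t$ is a border of $x$ and is privileged (a power of a single letter), so by the definition it suffices to check that $0^t$ occurs exactly twice in $x$. Because $w$ sits between two occurrences of the letter $1$ inside $x$, the maximal runs of $0$'s in $x$ are precisely the leading run of length $t$, the maximal runs of $0$'s internal to $w$ (each of length at most $t-1$, since $w$ avoids $0^t$), and the trailing run of length $t$. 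An occurrence of the factor $0^t$ must lie inside a run of $0$'s of length at least $t$, hence inside the leading or trailing run, so there are exactly two occurrences and $x$ is privileged. By the symmetry exchanging $0$ and $1$, the word $1^t 0 w 0 1^t$ is privileged whenever $w$ avoids $1^t$.

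Next, fix $t$ with $m := n - 2t - 2 \ge 0$. The families $\{\,0^t 1 w 1 0^t : |w| = m,\ w \text{ avoids } 0^t\,\}$ and $\{\,1^t 0 w 0 1^t : |w| = m,\ w \text{ avoids } 1^t\,\}$ consist of privileged words of length $n$, the maps $w \mapsto x$ are injective, and the two families are disjoint (the first consists of words beginning with $0$, the second of words beginning with $1$). Since the number of binary words of length $m$ avoiding $0^t$ equals the number avoiding $1^t$, namely $G_m^{(t)}$, we obtain $B(n) \ge 2 G_m^{(t)} \ge 2 \alpha_t^{\,m}$, the last step by the corollary above, where $\alpha_t = 2 - \beta_t$ and $\beta_t = \left( 2^t - \frac{t}{2} - \frac{t^2}{2^t} \right)^{-1}$.

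Finally I would choose $t = \lceil \log_2 n \rceil$, so that $t \ge 4$, $n \le 2^t < 2n$, and $m = n - 2t - 2 \ge 0$ (the last two being routine checks for $n \ge 13$). From $2^t < 2n$ we get $2^{m} = 2^{n-2}/2^{2t} > 2^{n-4}/n^2$. Since $2^t - \frac{t}{2} - \frac{t^2}{2^t} \ge 2^{t-1}$ for $t \ge 2$, we have $\beta_t \le 2^{1-t} \le 2/n$, hence $\beta_t/2 \le 1/n$; combining this with $m \le n$ and the elementary inequality $(1 - 1/n)^n \ge 1/4$ (valid for $n \ge 2$) gives
$$ \alpha_t^{\,m} = 2^{m}\,(1 - \beta_t/2)^{m} \ \ge\ \frac{2^{n-4}}{n^2} \cdot \frac{1}{4} \ =\ \frac{2^{n-6}}{n^2} , $$
and therefore $B(n) \ge 2 \cdot 2^{n-6}/n^2 = 2^{n-5}/n^2$, as required.

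The main obstacle is getting the constant exactly right: a single family $0^t 1 w 1 0^t$ yields only $2^{n-6}/n^2$ with these estimates, and the missing factor $2$ is recovered by also using the disjoint mirror family $1^t 0 w 0 1^t$. One can instead retain the single family and tighten the estimate: writing $2^t = \rho n$ with $\rho \in [1,2)$ and using $(1 - 2^{-t})^{m} \ge (1/4)^{n/2^t} = 4^{-1/\rho}$, the product $2^{m}(1 - 2^{-t})^{m}$ is at least $\frac{2^{n-5}}{n^2} \cdot \frac{2^{\,3-2/\rho}}{\rho^2}$, and $2^{\,3-2/\rho}/\rho^2 \ge 1$ throughout $[1,2)$. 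Everything else — the run-structure argument for privilegedness, the injectivity and disjointness of the families, the appeal to the corollary, and the arithmetic with $\beta_t$ — is routine.
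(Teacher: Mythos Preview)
Your argument is correct and follows the paper's route: exhibit privileged words of the form $0^t 1 w 1 0^t$ with $w$ avoiding $0^t$, invoke the corollary $G_m^{(t)} \ge \alpha_t^m$, and take $t \approx \log_2 n$. The differences are only in the endgame arithmetic. The paper uses a single family, chooses $t = \lfloor \log_2 n \rfloor + 1$, and reapplies Lemma~\ref{ineq-lem} with $s = n-2t-2$ to obtain $(2-\beta_t)^m \ge 2^m\bigl(1-\beta_t(n/2-t-1)\bigr)$, then shows $\beta_t(n/2-t-1) \le \tfrac12$ to land directly on $2^{n-2t-3} \ge 2^{n-5}/n^2$. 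You instead bound $(1-\beta_t/2)^m$ by the cruder $(1-1/n)^n \ge \tfrac14$, which costs a factor of~$2$, and recover that factor by adjoining the disjoint mirror family $1^t 0 w 0 1^t$; your alternative single-family computation with $\rho = 2^t/n$ also checks out, since $h(\rho)=3-2/\rho-2\log_2\rho$ is decreasing on $[1,2]$ with $h(2)=0$. Two small bonuses of your version: the base case $n\le 12$ is disposed of by the trivial observation $2^{n-5}/n^2 \le 1$ rather than by consulting the table, and the second application of Lemma~\ref{ineq-lem} (together with its side condition $\beta_t \le 6/(n-2t-2)$) is avoided entirely.
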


\begin{proof}
Each word of the form $0^t 1 w 1 0^t$ is privileged, where
$|w| = n-2t-2$ and $w$ contains no factor $0^t$.  The number
of such $w$, as we have seen, is $G_{n-2t-2}^{(t)}$.
So it suffices to pick the right $t$ to
get a lower bound on $G_{n-2t-2}^{(t)}$.

It is easy to check, using the data in the next section,
that our bound holds for $n \leq 10$.  So assume $n \geq 11$.

Now
\begin{eqnarray*}
G_{n-2t-2}^{(t)} & \geq & \alpha_t^{n-2t-2} \\
&=& (2-\beta_t)^{n-2t-2} \\
&\geq& 2^{n-2t-2} - \beta_t (n-2t-2) 2^{n-2t-3} \\
& = & 2^{n-2t-2} (1- \beta_t (n/2 -t - 1) ) ,
\end{eqnarray*}
by Lemma~\ref{ineq-lem} with $s = n - 2t - 2$, provided
$\beta_t \leq 6/(n-2t-2)$. 

We now choose $t = \lfloor \log_2 n \rfloor + 1$, so that
\begin{equation}
2^{t-1} \leq n < 2^t .
\label{eq2t}
\end{equation}
It is now easy to verify that
$\beta_t \leq 6/(n-2t-2)$ for $n \geq 11$.   

On the other hand, it is easy to verify that
$${{3t} \over 4} \geq {{t^2} \over {2^{t+1}}} $$
for all real $t \geq 0$, so
$$ {{3t} \over 4} + 1 - {{t^2} \over {2^{t+1}}} > 0.$$
Adding $2^{t-1}$ to both sides, and using \eqref{eq2t}, we get
$$ {n \over 2} < 2^{t-1} < 2^{t-1} + {{3t} \over 4} + 1 - {{t^2} \over 2^{t+1}} ,$$
which implies
$$ {n\over 2} -t - 1 \leq {1 \over 2} \left( 2^t - {t \over 2} - {{t^2} \over {2^t}} \right)$$
and so
$\beta_t (n/2 - t - 1) \leq 1/2$.

It follows that 
$$ B(n) \geq G_{n-2t-2}^{(t)} \geq  2^{n-2t-2} (1- \beta_t (n/2 -t - 1) \geq
2^{n-2t-3} \geq {{2^{n-5}} \over {n^2}} .$$
\end{proof}

\begin{openproblem}
What is the true asymptotic behavior of $B(n)$ as $n \rightarrow \infty$?
\end{openproblem}

Define the function $f$ as follows:
$$f(n) = \begin{cases} 
	n, & \text{ if $n \geq 2$; } \\
	n f(\lfloor \log_2 n \rfloor), & \text{otherwise.} 
	\end{cases}
$$
It should be possible to improve Theorem~\ref{bound} to
$B(n) = \Omega(2^n c^{\log^{*} (n)}  /f(n))$, where $c$ is a constant
and, as usual,
$\log^{*} (n)$ is the number of times we need to apply $\log_2$ to
$n$ to get a number $\leq 1$.
We sketch the outline of an incomplete argument here:

We generalize our argument above to count the number of privileged
words of length $n$
having any privileged border of length $\lfloor \log_2 n \rfloor$.
We can use our previous argument provided the count for arbitrary
patterns is larger than the count for $0^t$.

More precisely,
if $x(p,n)$ is the number of strings of
length $n$ beginning with the pattern $p$, ending with $p$,
and having no other occurrence of $p$, then $x(p,n)$
satisfies a linear recurrence of order $t = |p|$.  By analyzing this
carefully, it should be possible to show
that, provided $n$ is in a certain range with respect to $|p|$,
we have $x(p,n) \geq x(0^t, n)$.  

Then we can imitate our analysis above,
setting $t = \lfloor \log_2 n \rfloor$, to get
\begin{eqnarray*}
B(n) &\geq &
\sum_{{p~\text{privileged}} \atop {|p| = \lfloor \log_2 n \rfloor}}  
x(p,n)  \\
& \geq  & c B(\lfloor \log_2 n \rfloor) \cdot {{2^n} \over {n^2}},
\end{eqnarray*}
for a constant $c$.    By iterating this relationship
$\log^{*} (n)$ times, we would get the claimed bound.

\section{Explicit enumeration of privileged words}

We finish with a table giving the number $B(n)$
of privileged binary words of length $n$ for
$0 \leq n \leq 38$.  It is sequence A231208
in Sloane's {\it On-line Encyclopedia of Integer Sequences}
\cite{Sloane}.

\begin{table}[H]
\begin{center}
\begin{tabular}{|c|c||c|c||c|c|}
\hline
$n$ & $B(n)$ & $n$ & $B(n)$ & $n$ & $B(n)$ \\
\hline
0 & 1 & 13 & 328 & 26 & 875408 \\
1 & 2 & 14 & 568 & 27 & 1649236 \\
2 & 2 & 15 & 1040 & 28 & 3112220 \\
3 & 4 & 16 & 1848 & 29 & 5888548 \\
4 & 4 & 17 & 3388 & 30 & 11160548 \\
5 & 8 & 18 & 38576 & 31 & 21198388 \\
6 & 8 & 19 & 71444 & 32 & 40329428 \\
7 & 16 & 20 & 133256 & 33 & 76865388 \\
8 & 20 & 21 & 248676 & 34 & 146720792 \\
9 & 40 & 22 & 466264 & 35 & 280498456 \\
10 & 60 & 23 & 875408 & 36 & 536986772 \\
11 & 108 & 24 & 1649236 & 37 & 1029413396  \\
12 & 176 & 25 & 3112220 & 38 & 1975848400 \\
\hline
\end{tabular}
\end{center}
\end{table}

\end{document}